\numberwithin{equation}{section}
\newtheorem{theorem}{Theorem}[section]
\newtheorem{cor}[theorem]{Corollary}
\newtheorem{remark}[theorem]{Remark}
\theoremstyle{definition}
\numberwithin{equation}{section}
\begin{document}

\thispagestyle{empty}

\vspace*{1cm}

\begin{center}

{\LARGE\bf On a two-particle bound system on the half-line} \\

\vspace*{2cm}

{\large Joachim Kerner \footnote{E-mail address: {\tt Joachim.Kerner@fernuni-hagen.de}} and Tobias M\"{u}hlenbruch \footnote{E-mail address: {\tt Tobias.Muehlenbruch@fernuni-hagen.de}} }%

\vspace*{5mm}

Department of Mathematics and Computer Science\\
FernUniversit\"{a}t in Hagen\\
58084 Hagen\\
Germany\\

\end{center}

\vfill

\begin{abstract} In this paper we provide an extension of the model discussed in \cite{KM16} describing two singularly interacting particles on the half-line $\mathbb{R}_+$. In this model, the particles are interacting only whenever at least one particle is situated at the origin. Stimulated by \cite{QUnruh} we then provide a generalisation of this model in order to include additional interactions between the particles leading to a molecular-like state. We give a precise mathematical formulation of the Hamiltonian of the system and perform spectral analysis. In particular, we are interested in the effect of the singular two-particle interactions onto the molecule. 
\end{abstract}

\newpage

\section{Introduction}

Singular many-particle interactions on general compact quantum graphs were introduced in \cite{BKSingular,BKContact} in order to provide a model for the investigation of many-particle quantum chaos. Based on that, a model of two singularly interacting particles on the half-line $\mathbb{R}_+=[0,\infty)$ (a simple non-compact quantum graph) was formulated in \cite{KM16}. More precisely, the Hamiltonian of this model is formally given by 
\begin{equation}\label{FormalHamiltonian}
H=-\frac{\partial^2}{\partial x^2}-\frac{\partial^2}{\partial y^2}+v(x,y)\left[\delta(x)+\delta(y)\right]\ ,
\end{equation}
$v:\mathbb{R}^2 \rightarrow \mathbb{R}$ being some symmetric (real-valued) interaction potential. Due to the properties of the $\delta$-potential we see from \eqref{FormalHamiltonian} that the two particles are interacting only whenever at least one of the particles is situated at the origin. Furthermore, given the support of $v$ is contained in $B_{\varepsilon}(0)$, i.e., the open ball of radius $\varepsilon > 0$ around $0 \in \mathbb{R}^2$, the two particles are interacting only whenever one particle is situated at the origin and the other is $\varepsilon$-close to it. 

From a physical point of view, the interesting property of \eqref{FormalHamiltonian} is that the two-particle interactions are spatially localised onto the origin. Usually, one expects the two-particle interaction to depend on the relative coordinate only. However, as outlined in \cite{KM16}, there are situations where a spatial localisation of many-particle interactions is expected due to certain inhomogeneities of the real physical system to be modeled. In particular, the authors refer to so called composite wires from the field of applied superconductivity \cite{FossheimSuperconducting}. Those consist of superconducting parts as well as normal-conducting parts and due to the Cooper pairing-effect of superconductivity, the interaction between a pair of electrons in such a wire depends on their corresponding spatial position. In this sense, the Hamiltonian \eqref{FormalHamiltonian} might be used to model a system of two electrons in a wire which is normal-conducting except for a relatively small part at the beginning of the wire which is superconducting. In addition, as described in \cite{glasser1993solvable,glasser2005solvable}, non-separable quantum two-body problems are quite rarely discussed in the literature although having important applications in condensed matter physics as well as quantum entanglement.

In this paper we are interested in an extension of the model discussed in \cite{KM16} by adding to \eqref{FormalHamiltonian} an attractive binding potential between the particles leading to a molecular-like state. This was motivated by \cite{QUnruh} (see Eq.~(5) therein) where the scattering of a two-particle bound system at mirrors is investigated. In particular, the authors are interested in the scattering process given each particle of the molecule is scattered separately at a mirror (each of the mirrors is modelled by a $\delta$-potential). Actually, by allowing for a non-constant potential $v$ in \eqref{FormalHamiltonian}, the model discussed in this paper also provides an extension of the model discussed in \cite{QUnruh}. In their language, the scattering of one particle of the molecule at its mirror is then no longer independent from the position of the second particle. From a physical point of view, this seems to be a possible scenario.

The paper is organised as follows: In Section~\ref{Sec1} we formulate the model and rigorously construct the Hamiltonian of the system via a suitable quadratic form. In Section~\ref{Sec2} we perform spectral analysis and describe the essential as well as the discrete part of the spectrum. We prove the existence of an eigenstate below the essential spectrum in the case of vanishing singular interactions and which is due to the geometry of the one-particle configuration space. We then investigate the effect of additional singular two-particle interactions on the spectrum and prove, as a main result, that the discrete part of spectrum becomes trivial given the singular interactions are repulsive and strong enough.

\section{The model}\label{Sec1}

In this paper we consider a system of two (distinguishable) particles moving on the half-line $\mathbb{R}_+=[0,\infty)$ described by the formal Hamiltonian
\begin{equation}\label{FormalHamiltonianNEW}
H_b=-\frac{\partial^2}{\partial x^2}-\frac{\partial^2}{\partial y^2}+v(x,y)\left[\delta(x)+\delta(y)\right]\ + V_b(|x-y|)\ ,
\end{equation}
$V_b:\mathbb{R}_+ \rightarrow (-\infty,\infty]$ being some binding-potential. For simplicity, we choose $V_b$ to be given by
\begin{equation}\label{BindingPotential}
V_b(|x-y|):=
\begin{cases}
0 \quad \text{if} \quad |x-y| \leq d\ , \\
\infty \quad \text{else}\ ,
\end{cases}
\end{equation}
$d > 0$ characterising the ``size'' of the molecule. Due to the presence of the binding-potential, the two-particle configuration space has been reduced from $\mathbb{R}^2_+$ to $\Omega$ which is given by
\begin{equation}
\Omega:=\{(x,y) \in \mathbb{R}^2_+\ | \ |x-y| \leq d\}\ .
\end{equation}
For later purposes we also define 
\begin{equation}
\partial \Omega_{\sigma}:=\{(x,y) \in \Omega\ |\  x=0 \quad \lor \quad y=0 \}\ ,
\end{equation}
and 
\begin{equation}
\partial \Omega_{D}:=\{(x,y) \in \Omega\ |\  |x-y|=d \}\ .
\end{equation}
Now, in order to arrive at a rigorous realisation of \eqref{FormalHamiltonianNEW} we construct a suitable quadratic form on the Hilbert space $L^2(\Omega)$. We define
\begin{equation}\label{QuadraticForm}
q_{d}[\varphi]:=\int_{\Omega}|\nabla \varphi|^2 \ \mathrm{d}x -\int_{\partial \Omega_{\sigma}} \sigma(y)|\varphi_{bv}|^2 \ \mathrm{d}y\ ,
\end{equation} 
on the domain $\mathcal{D}_q:=\{ \varphi \in H^1(\Omega): \varphi|_{\partial \Omega_{D}}=0 \}$. We note that the Dirichlet boundary conditions along $\partial \Omega_{D}$ are induced by \eqref{BindingPotential}. Furthermore, we set $\sigma(y):=-v(0,y)$ and $\varphi_{bv}=:\varphi|_{\partial \Omega_{\sigma}}$ which are well defined according to the trace theorem for Sobolev functions \cite{Dob05}. Note that $\sigma$ is assumed to be real-valued throughout the paper.
\begin{remark}\label{RemarkBoundaryConditions} The quadratic form \eqref{QuadraticForm} corresponds to a variational formulation of a boundary-value problem for the two-dimensional Laplacian $-\Delta$ on $\Omega$ 
	with coordinate dependent Robin boundary conditions along $\partial \Omega_{\sigma}$ and Dirichlet boundary conditions along $\partial \Omega_{D}$. Indeed, the boundary conditions along $\partial \Omega_{\sigma}$ read
	\begin{equation}\label{RBConditions}\begin{split}
	\frac{\partial \varphi}{\partial n}(0,y)+\sigma(y)\varphi(0,y)&=0\ , \quad \text{and} \\
	\frac{\partial \varphi}{\partial n}(y,0)+\sigma(y)\varphi(y,0)&=0\ ,
	\end{split}
	\end{equation}
	for a.e. $y \in [0,d]$. Here $\frac{\partial}{\partial n}$ denotes the inward normal derivative along $\partial \Omega_{\sigma}$.

	\end{remark}

Using the methods from \cite{KM16} we can directly establish the following statement which generalises Theorem 2.2 of \cite{KM16}. 
\begin{theorem} Let $\sigma \in L^{\infty}(0,d)$ with $0 < d \leq \infty$ be given. Then $q_{d}[\cdot]$ is densely defined, closed and semi-bounded from below. 
\end{theorem}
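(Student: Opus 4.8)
The plan is to follow the route used for Theorem~2.2 in \cite{KM16}; the new features --- the $y$-dependence of $\sigma$ and the reduced configuration space $\Omega$ with its extra Dirichlet face $\partial\Omega_D$ --- do not change the structure of the argument. Throughout I use that $\Omega$ (being the intersection of the quadrant $\mathbb{R}^2_+$ with the strip $\{|x-y|\le d\}$, hence a convex domain with Lipschitz boundary) admits a bounded trace operator $\gamma\colon H^1(\Omega)\to L^2(\partial\Omega)$, as in \cite{Dob05}. Density of $\mathcal{D}_q$ in $L^2(\Omega)$ is immediate: the space of smooth functions with compact support in the interior of $\Omega$ is contained in $\mathcal{D}_q$ (such functions vanish near $\partial\Omega_D$) and is already dense in $L^2(\Omega)$.

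The crux is a trace inequality with an arbitrarily small gradient coefficient: for every $\epsilon>0$ there is $C_\epsilon>0$ with
\[
\|\varphi_{bv}\|^2_{L^2(\partial\Omega_\sigma)}\;\le\;\epsilon\,\|\nabla\varphi\|^2_{L^2(\Omega)}+C_\epsilon\,\|\varphi\|^2_{L^2(\Omega)}
\qquad(\varphi\in H^1(\Omega)).
\]
I would establish this exactly as in \cite{KM16}: near the face $\{x=0\}$ of $\partial\Omega_\sigma$ choose a cut-off $\chi_\delta\in C^\infty_c([0,\delta))$ with $\chi_\delta(0)=1$ and $|\chi_\delta'|\le c/\delta$, write $|\varphi(0,y)|^2=-\int_0^\infty\partial_x\bigl(\chi_\delta(x)|\varphi(x,y)|^2\bigr)\,\ud x$, expand the derivative, apply Cauchy--Schwarz and Young's inequality, and integrate in $y$; the face $\{y=0\}$ is handled symmetrically. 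Sending $\delta\to0$ drives the coefficient of $\|\nabla\varphi\|^2$ to zero. Since this is a purely local computation near $\partial\Omega_\sigma$, it is unaffected by whether $d$ is finite or infinite. Semi-boundedness then follows at once: as $\sigma\in L^\infty(0,d)$,
\[
q_d[\varphi]\;\ge\;\|\nabla\varphi\|^2-\|\sigma\|_\infty\|\varphi_{bv}\|^2_{L^2(\partial\Omega_\sigma)}\;\ge\;(1-\epsilon\|\sigma\|_\infty)\|\nabla\varphi\|^2-C_\epsilon\|\sigma\|_\infty\|\varphi\|^2,
\]
so picking $\epsilon$ with $\epsilon\|\sigma\|_\infty<1$ yields $q_d[\varphi]\ge-M\|\varphi\|^2$ with $M:=C_\epsilon\|\sigma\|_\infty$.

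For closedness I would show that the form norm $\|\varphi\|^2_q:=q_d[\varphi]+(M+1)\|\varphi\|^2$ is equivalent to the $H^1(\Omega)$-norm on $\mathcal{D}_q$. The lower estimate $\|\varphi\|^2_q\ge(1-\epsilon\|\sigma\|_\infty)\|\nabla\varphi\|^2+\|\varphi\|^2$ is just the previous display, and the upper estimate $\|\varphi\|^2_q\le\|\nabla\varphi\|^2+\|\sigma\|_\infty\|\gamma\varphi\|^2_{L^2(\partial\Omega)}+(M+1)\|\varphi\|^2\le C\|\varphi\|^2_{H^1(\Omega)}$ follows from boundedness of $\gamma$. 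Since $\mathcal{D}_q$ is the kernel of the bounded restriction $H^1(\Omega)\to L^2(\partial\Omega_D)$, it is a closed subspace of $H^1(\Omega)$, hence complete for the $H^1$-norm and therefore for the equivalent norm $\|\cdot\|_q$; completeness of $(\mathcal{D}_q,\|\cdot\|_q)$ is precisely closedness of $q_d$. The only genuinely delicate step is the trace inequality with small gradient coefficient, and in particular checking that the cut-off argument goes through when $d=\infty$ and $\Omega,\partial\Omega_\sigma$ are unbounded; because the argument is local near the boundary it does, which is why no restriction on $d$ is needed.
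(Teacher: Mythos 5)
Your proposal is correct and follows essentially the same route the paper itself invokes (the methods of Theorem~2.2 in \cite{KM16}): density via compactly supported smooth functions, a trace estimate $\|\varphi_{bv}\|^2\le\epsilon\|\nabla\varphi\|^2+C_\epsilon\|\varphi\|^2$ giving semi-boundedness, and closedness via equivalence of the form norm with the $H^1(\Omega)$-norm on the closed subspace $\mathcal{D}_q$. Your observations that the argument is local near $\partial\Omega_\sigma$ (hence insensitive to $d=\infty$) and that $\mathcal{D}_q$ is closed as the kernel of the trace onto $\partial\Omega_D$ are exactly the points needed to carry the \cite{KM16} proof over to the present setting.
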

Hence, according to the representation theorem for quadratic forms \cite{BEH08}, there exists a unique self-adjoint operator being associated with $q_{d}[\cdot]$. This operator, being the Hamiltonian of our system, shall be denoted by $-\Delta^d_{\sigma}$ and his domain by $\mathcal{D}(-\Delta^d_{\sigma}) \subset \mathcal{D}_q$. 
\begin{remark}We note that the case $d=\infty$ corresponds to the model where no binding pontential is added in \eqref{FormalHamiltonianNEW}. In other words, the operator $-\Delta^\infty_{\sigma}$ is the self-adjoint Hamiltonian of the model discussed in \cite{KM16}.
\end{remark}

\section{Spectral properties of $-\Delta^d_{\sigma}$}\label{Sec2}
\subsection{On the essential spectrum}
In this section we are interested in the spectral properties of the self-adjoint operator $-\Delta^d_{\sigma}$ for values $0 < d < \infty$. We start by characterising the essential spectrum and obtain the following result.
\begin{theorem}\label{TheoremEssentialSpectrum} Let $\sigma \in L^{\infty}(0,d)$ with $0 < d < \infty$ be given. Then one has
	\begin{equation}
	\sigma_{ess}(-\Delta^d_{\sigma})=[\pi^2 / 2d^2,\infty).
	\end{equation}
\end{theorem}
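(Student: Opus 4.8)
The plan is to compute the essential spectrum via a geometric decomposition of $\Omega$ together with Weyl's theorem (decomposition principle), exploiting the fact that $\sigma$ is supported near the origin so the singular boundary term is a relatively compact perturbation. First I would observe that $\Omega$ is a half-infinite strip-like region: introducing the rotated coordinates $u=(x+y)/\sqrt{2}$ (the ``center of mass'', running over a half-line) and $w=(x-y)/\sqrt{2}$ (the ``relative'' coordinate, confined to $|w|\le d/\sqrt 2$), one sees that for large $u$ the domain $\Omega$ looks exactly like the straight strip $\mathbb{R}_{+}\times[-d/\sqrt2,\,d/\sqrt2]$ with Dirichlet conditions on the long sides $\partial\Omega_D$ and no boundary term from $\partial\Omega_\sigma$ at all (since $\sigma\in L^\infty(0,d)$ has support in the bounded region where $x$ or $y$ is small). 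On this straight Dirichlet strip, separation of variables gives the transverse eigenvalues $(\pi k/d\cdot(\sqrt2/2)\cdot\ldots)$; more carefully the transverse problem on an interval of length $d$ with Dirichlet endpoints has lowest eigenvalue $\pi^2/d^2$, but after accounting for the $1/2$ from the coordinate rotation the continuous spectrum of the longitudinal Laplacian on $\mathbb{R}_+$ pushes the bottom to $\pi^2/2d^2$, giving $[\pi^2/2d^2,\infty)$.

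The key steps, in order, are: (i) truncate $\Omega$ at $x+y=R$ for large $R$ into a bounded piece $\Omega_R^{\mathrm{int}}$ and an outer piece $\Omega_R^{\mathrm{ext}}$, which is genuinely the straight strip $[R/\sqrt2,\infty)\times[-d/\sqrt2,d/\sqrt2]$; (ii) use Neumann bracketing (or the decomposition principle of \cite{BEH08}) to show that $\sigma_{ess}(-\Delta^d_\sigma)$ coincides with $\sigma_{ess}$ of the Laplacian on $\Omega_R^{\mathrm{ext}}$ with Dirichlet conditions on $\partial\Omega_D$, Neumann (or any fixed self-adjoint) condition on the artificial cut, and \emph{no} $\sigma$-term — because adding the bounded region and the bounded-support perturbation $\sigma$ changes the operator only by something with compact resolvent difference; (iii) on that exterior strip, separate variables to get $-\partial_u^2$ on a half-line tensored with $-\partial_w^2$ on $[-d/\sqrt2,d/\sqrt2]$ with Dirichlet ends, whose spectrum is $\bigcup_{k\ge1}\big[(\pi k/\sqrt2 d)^2\cdot\ldots\,,\infty\big)=[\pi^2/2d^2,\infty)$; (iv) independently establish the reverse inclusion by constructing an explicit Weyl sequence: take the ground transverse mode $\cos(\pi w\sqrt2/2d)$ (suitably normalized) times a longitudinal wave packet $e^{\ui\kappa u}\chi(u/n)$ pushed out to infinity, which is approximately an eigenfunction at energy $\pi^2/2d^2+\kappa^2$ and lives, for large $n$, entirely in the region where $\sigma\equiv0$; letting $\kappa$ range over $[0,\infty)$ covers the whole half-line.

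The main obstacle, and the part requiring genuine care rather than routine bookkeeping, is step (ii): making precise that the $\sigma$-boundary term together with the geometric ``corner'' region near the origin contributes only a relatively compact perturbation. This is delicate because $-\Delta^d_\sigma$ is defined through a quadratic form with a boundary integral, not through an additive potential, so ``compact perturbation'' must be phrased at the level of resolvent differences or via the form-version of the decomposition principle (cf.\ \cite{BEH08}), and one must check that the Dirichlet-decoupled interior piece has purely discrete spectrum (true, since $\Omega_R^{\mathrm{int}}$ is bounded with Lipschitz boundary, so the embedding $H^1\hookrightarrow L^2$ is compact) and that $\sigma\in L^\infty$ with bounded support indeed yields a compact trace perturbation. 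Once this localisation principle is set up, the separation-of-variables computation in steps (iii)--(iv) is elementary, and the coordinate-rotation Jacobian (the factor $\tfrac12$ turning $\pi^2/d^2$ into $\pi^2/2d^2$) is just careful arithmetic.
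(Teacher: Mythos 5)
Your proposal is correct and follows essentially the same route as the paper: Weyl sequences built from transverse Dirichlet modes supported far out along the strip for the inclusion $[\pi^2/2d^2,\infty)\subseteq\sigma_{ess}$, plus a Neumann-bracketing decoupling into a bounded Lipschitz piece (purely discrete spectrum, so the $\sigma$-boundary term is harmless there) and a half-strip whose spectrum starts at $\pi^2/2d^2$. One small correction: the factor $\tfrac12$ is not a Jacobian effect --- the rotation is an isometry --- but simply reflects that the transverse width of the strip is $\sqrt{2}\,d$, so the lowest Dirichlet mode is $\pi^2/(\sqrt{2}\,d)^2=\pi^2/2d^2$.
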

\begin{proof}We sketch the proof using similar methods as employed in the proof of Theorem~3.1 of \cite{KM16}. 
	
	In order to prove that $[\pi^2 / 2d^2,\infty) \subset \sigma_{ess}(-\Delta^d_{\sigma})$ we consider the rectangle $D_{k_n,l_n} \subset \Omega$, $l_n > k_n$, which is obtained by dissection of $\Omega$ using the two straight lines $y_1=-x+(2k_n-d)$ and $y_2=-x+(2l_n-d)$ (see Fig.~1~(a)). On $D_{k_n,l_n}$ one then defines $\varphi_n$ to be the (normalised) ground state eigenfunction of the two-dimensional Laplacian subjected to Dirichlet boundary conditions. In particular, due to a separation of variables, the corresponding ground state eigenvalue $E_{n}$ is calculated to be
	\begin{equation}
	E_{n}=\frac{\pi^2}{2d^2}+\frac{\pi^2}{2(l_n-k_n)^2}\ .
	\end{equation} 
	Hence, by letting $l_n,k_n \rightarrow \infty$ appropriately as $n \rightarrow \infty$, we see that $(\varphi_n)_{n \in \mathbb{N}}$ is a Weyl sequence for any value $\lambda \in [\pi^2 / 2d^2,\infty)$. 
	
	On the other hand, to show that $\inf \sigma_{ess}(-\Delta^d_{\sigma})=\pi^2 / 2d^2$ one employs a bracketing argument \cite{BEH08}, i.e., one considers the direct sum $-\Delta^N_{\Omega_1} \oplus -\Delta^N_{D_{k,\infty}}$ of two-dimensional Laplacians with $\Omega_1:=\Omega\setminus D_{k,\infty}$ and with the boundary conditions of the original problem along $\partial \Omega$ (see Fig.~1~(b)). We note that the index $N$ refers to additional Neumann boundary conditions along the line obtained along the construction of $D_{k,\infty}$. Since $\sigma_{ess}(-\Delta^N_{\Omega_1})=\emptyset$ ($-\Delta^N_{\Omega_1}$ is a Laplacian on a bounded Lipschitz domain and has purely discrete spectrum) and since $-\Delta^N_{\Omega_1} \oplus -\Delta^N_{D_{k,l}}$ is smaller than $-\Delta^d_{\sigma}$ in terms of operators 
	we conclude that
	\begin{equation}\begin{split}
	\inf \sigma_{ess}(-\Delta^N_{\Omega_1} \oplus -\Delta^N_{D_{k,\infty}})&=\inf \sigma_{ess}(-\Delta^N_{D_{k,\infty}}) \\
	&\leq \inf \sigma_{ess}(-\Delta^d_{\sigma})\ .
	\end{split}
	\end{equation}
	Furthermore, since $\inf \sigma_{ess}(-\Delta^N_{D_{k,\infty}})=\pi^2 / 2d^2$ by using the reasoning from above as well as a separation of variables argument, we arrive at the statement.
\end{proof}

\begin{figure}
	\centering
	\begin{subfigure}[b]{0.3\textwidth}
		\includegraphics[width=\textwidth]{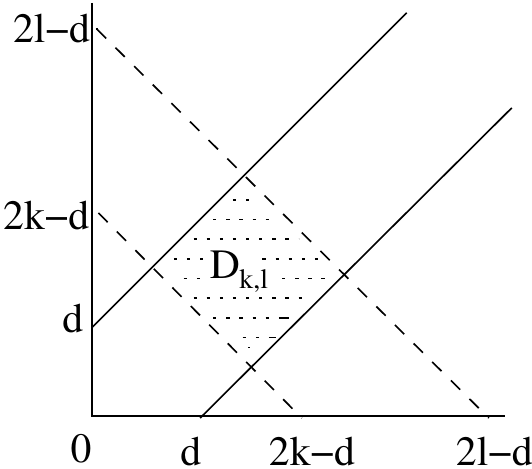}
		\caption{The domain $D_{k,l}$}
		\label{figure1a}
	\end{subfigure}
	\quad %add desired spacing between images, e. g. ~, \quad, \qquad, \hfill etc.
	%(or a blank line to force the subfigure onto a new line)
	\begin{subfigure}[b]{0.3\textwidth}
		\includegraphics[width=\textwidth]{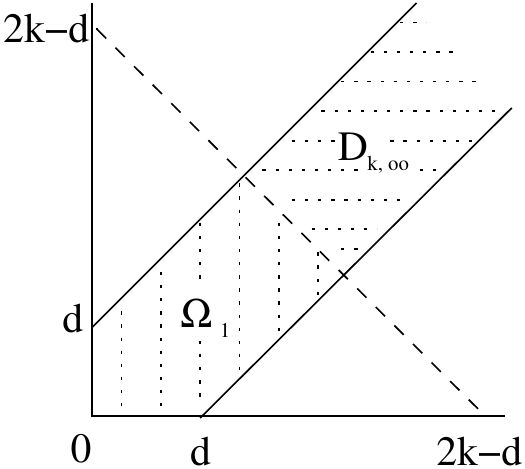}
		\caption{Domains $\Omega_1$ and $D_{k,\infty}$}
		\label{figure1b}
	\end{subfigure}
	\caption{Illustration of the domains $D_{k,l}$, $D_{k,\infty}$ and $\Omega_1$}\label{figure1}
\end{figure}

\begin{remark} By Theorem~\ref{TheoremEssentialSpectrum} we see that $\inf \sigma_{ess}(-\Delta^d_{\sigma}) \rightarrow 0$ as $d \rightarrow \infty$. This is in accordance with Theorem~3.1 of \cite{KM16} given $\sigma \in L^{\infty}(0,\infty)$ fulfils some additional properties, e.g., $|\sigma(y)| \rightarrow 0$ as $y \rightarrow \infty$. However, if one considers the case $\sigma(y):=\sigma > 0$ for a.e. $y\in [0,d]$ or $y \in [0,\infty)$ then $\inf \sigma_{ess}(-\Delta^d_{\sigma})$ does not converge to $\inf \sigma_{ess}(-\Delta^\infty_{\sigma})$, see Remark~3.2 of \cite{KM16}.
\end{remark}
\subsection{On the discrete spectrum}
By Theorem~\ref{TheoremEssentialSpectrum} we know that the bottom of the essential spectrum is located at $\pi^2/2d^2$. One might now ask for the existence of isolated eigenvalues below the essential spectrum. In a first result we investigate the seemingly trivial case of vanishing boundary potential, i.e., $\sigma\equiv 0$. From a physical point of view, this means that the ``molecule'' is moving freely on the half-line with no singular two-particle interactions present. However, quite surprisingly, we prove the existence of an eigenvalue below the bottom of the essential spectrum already in this case.
\begin{theorem}\label{TheoremBoundState} Let $-\Delta^d_{\sigma}$ be given with $\sigma\equiv 0$ and $0 < d < \infty$. Then there exists at least one eigenvalue below the essential spectrum, i.e., 
	\begin{equation}
	\sigma_d(-\Delta^d_{\sigma}) \neq \emptyset\ .
	\end{equation}
\end{theorem}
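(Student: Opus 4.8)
The plan is to produce a trial function $\varphi\in\mathcal{D}_q$ with $q_d[\varphi]<\frac{\pi^2}{2d^2}\,\|\varphi\|_{L^2(\Omega)}^2$. Since $\inf\sigma_{ess}(-\Delta^d_\sigma)=\pi^2/2d^2$ by Theorem~\ref{TheoremEssentialSpectrum} and $-\Delta^d_\sigma$ is the self-adjoint operator associated with $q_d$, the variational characterisation of the bottom of the spectrum then gives $\inf\sigma(-\Delta^d_\sigma)<\inf\sigma_{ess}(-\Delta^d_\sigma)$; as every point of the spectrum below $\inf\sigma_{ess}$ is an isolated eigenvalue of finite multiplicity, this yields $\sigma_d(-\Delta^d_\sigma)\neq\emptyset$.

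First I would pass to the rotated coordinates $u=(x+y)/\sqrt2$, $v=(x-y)/\sqrt2$, which is an orthogonal change of variables, so that $-\Delta=-\partial_u^2-\partial_v^2$ and $\Omega$ becomes $\{(u,v):|v|\le a,\ u\ge|v|\}$ with $a:=d/\sqrt2$. In these coordinates $\partial\Omega_D=\{|v|=a\}$ carries Dirichlet conditions, while $\partial\Omega_\sigma=\{|v|=u\}$ consists of the two legs of the isosceles right triangle $T:=\{0\le u\le a,\ |v|\le u\}$ and, because $\sigma\equiv0$, carries Neumann conditions (natural for $q_d$, hence no constraint on trial functions). Thus $\Omega$ is a Dirichlet half-strip of width $2a$ with a triangular Neumann ``cap'' $T$ glued at $u=a$, and $\pi^2/2d^2=\pi^2/4a^2=(\pi/2a)^2$ is exactly the transverse Dirichlet ground-state energy on $[-a,a]$, consistent with Theorem~\ref{TheoremEssentialSpectrum}.

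As trial function I would take $\varphi(u,v):=\cos\!\big(\tfrac{\pi v}{2a}\big)\,h(u)$, where $h$ is Lipschitz with $h\equiv1$ on $[0,L]$, $h(u)=2-u/L$ on $[L,2L]$, $h\equiv0$ on $[2L,\infty)$, and $L>a$. Since $\cos(\pm\pi/2)=0$ one has $\varphi\in H^1(\Omega)$, $\varphi|_{\partial\Omega_D}=0$, so $\varphi\in\mathcal{D}_q$, and clearly $0<\|\varphi\|_{L^2(\Omega)}^2<\infty$. Splitting $\Omega=T\cup\{u\ge a\}$, using $\int_{-a}^{a}\cos^2(\tfrac{\pi v}{2a})\,dv=\int_{-a}^{a}\sin^2(\tfrac{\pi v}{2a})\,dv=a$ and $\int_{-a}^{a}\cos(\tfrac{\pi v}{a})\,dv=0$ on the half-strip part, and noting that $h'$ is supported in $\{u\ge a\}$ while $h\equiv1$ on $T$, a direct computation gives
\begin{equation*}
q_d[\varphi]-\frac{\pi^2}{4a^2}\,\|\varphi\|_{L^2(\Omega)}^2 \;=\; -\,\frac{\pi^2}{4a^2}\,\delta \;+\; a\int_a^\infty h'(u)^2\,du,\qquad \delta:=\int_0^a\!\!\int_{-u}^{u}\cos\!\Big(\frac{\pi v}{a}\Big)\,dv\,du,
\end{equation*}
because on $T$ the difference of the transverse ``kinetic'' and ``mass'' densities equals $\cos^2-\sin^2=\cos(\pi v/a)$. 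An elementary evaluation gives $\delta=4a^2/\pi^2$, hence $\tfrac{\pi^2}{4a^2}\delta=1$, while $a\int_a^\infty h'^2=a/L$; choosing e.g. $L=2a$ makes the right-hand side equal to $-\tfrac12<0$, which finishes the argument.

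The only genuinely substantive point is the observation underlying $\delta>0$: the transverse Dirichlet mode $\cos(\tfrac{\pi v}{2a})$ concentrates near $v=0$, precisely where the Neumann cap $T$ extends furthest back toward the vertex, so that $T$ contributes a strictly larger share to $\|\varphi\|^2$ than to $q_d[\varphi]$. This surplus $\tfrac{\pi^2}{4a^2}\delta=1$ is a fixed positive number, independent of the longitudinal cutoff, whereas the cutoff cost $a/L$ can be made arbitrarily small; this interplay is the heart of the matter, and everything else — the integrals over $T$ and over the half-strip, and the standard facts that $\inf\sigma(-\Delta^d_\sigma)$ is the infimum of the Rayleigh quotient of $q_d$ over $\mathcal{D}_q$ and that anything strictly below $\inf\sigma_{ess}$ is a true eigenvalue — is routine.
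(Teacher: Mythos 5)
Your argument is correct, and it is genuinely different from the one in the paper. The paper does not construct a trial function from scratch: it imports from \cite{ExnerLShaped} the existence of a bound state (at $\lambda\approx 0{,}93\,(\pi/b)^2$, $b=\sqrt2\,d$) for the Dirichlet Laplacian on the L-shaped region, passes to the cross-shaped extension $\Omega_b'$, and then cuts the ground state of $\Omega_b'$ along the lines $y=x$ and $y=-x+b$ into four restrictions, each living on a congruent copy of $\Omega$, concluding via the elementary inequality $\sum_i x_i/\sum_i y_i\geq\min_i x_i/y_i$ and Rayleigh--Ritz. You instead work directly on $\Omega$ in the rotated coordinates $u=(x+y)/\sqrt2$, $v=(x-y)/\sqrt2$, where $\Omega$ is a Dirichlet half-strip of width $2a=\sqrt2\,d$ with a triangular Neumann end-cap, and you test $q_d$ with the product $\cos(\pi v/2a)\,h(u)$ of the transverse threshold mode and a longitudinal cutoff; the cap contributes the fixed surplus $\frac{\pi^2}{4a^2}\delta=1$ (your evaluation $\delta=4a^2/\pi^2$ is correct, as is the vanishing of the $\cos(\pi v/a)$ integral over full transverse sections and the cutoff cost $a/L$), so with $L=2a$ you get $q_d[\varphi]-\frac{\pi^2}{4a^2}\|\varphi\|^2=-\tfrac12<0$ and conclude exactly as the paper does via the min--max characterisation and Theorem~\ref{TheoremEssentialSpectrum}. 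Your route is self-contained and quantitative: it avoids relying on the (partly numerical) waveguide result of \cite{ExnerLShaped} and the slightly delicate point in the paper's restriction argument (that at least one restriction has a nonzero Rayleigh quotient dominated by $\lambda$), and it even yields an explicit margin below the threshold; what the paper's route buys is the link to the known L-shaped waveguide literature and, implicitly, a sharper numerical estimate of where the eigenvalue sits.
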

\begin{proof} We use the results of \cite{ExnerLShaped}. In this paper, the authors prove the existence of an eigenstate at energy $\lambda \approx 0,93\cdot (\pi/b)^2$ of the two-dimensional Laplacian on the L-shaped domain
	\begin{equation}\begin{split}
	\Omega_{b}:=\{(x,y) \in &\mathbb{R}^2_+\ | \ x \in [0,\infty) \ \land \  0 \leq y \leq b \} \\ &\cup \ \{(x,y) \in \mathbb{R}^2_+\ | \ 0 \leq x \leq b \ \land \  y \in [0,\infty) \}  
	\end{split}
	\end{equation}
	subjected to Dirichlet boundary conditions (see Fig.~2~(a)). We note that the parameter $b > 0$ of their model is related to the parameter $d > 0$ of our model via the relation $b=\sqrt{2}d$. We will also denote the corresponding eigenfunction of their model by $\varphi \in H^1(\Omega_b)$.
	
	The strategy now consists of employing a variational argument. For this we consider the two-dimensional Laplacian on an extended version of $\Omega_{b}$, i.e., on
	\begin{equation}\begin{split}
	\Omega_b^{\prime}:=\{(x,y) \in &\mathbb{R}^2\ | \ x \in (-\infty,\infty) \ \land \  0 \leq y \leq b \} \\
	&\ \quad \cup \ \{(x,y) \in \mathbb{R}^2\ | \ 0 \leq x \leq b \ \land \  y \in (-\infty,\infty) \}  \ ,
	\end{split}
	\end{equation}
	again subjected to Dirichlet boundary conditions (see Fig.~2~(b)). One then readily verifies that $\varphi \in H^1(\Omega_b)$, extended by zero onto $\Omega_b^{\prime}$, yields the upper bound $\lambda$ to the ground state energy of the Laplacian defined on $\Omega_b^{\prime}$ (Rayleigh-Ritz variational principle). Let then $\varphi^{\prime} \in H^1(\Omega_b^{\prime})$ denote the corresponding ground state eigenfunction of the two-dimensional Laplacian on $\Omega_b^{\prime}$. The important step is now to dissect $\Omega_b^{\prime}$ into four parts employing the two straight lines $y=x$ and $y=-x+b$ (see Fig.~2~(c)). Each obtained part is a (rotated and translated) copy of our original domain $\Omega$ and hence we can consider the restrictions of $\varphi^{\prime}\neq 0$ to any of the parts. This leaves us with (at most) four trial functions for our original problem. Taking into account that each such restriction vanishes if and only if its gradient does (due to the Dirichlet boundary conditions) we can employ the inequality 
	\begin{equation}\label{EquationSumme}
	\frac{\sum_{i}x_i}{\sum_{i} y_i}  \geq \min_{i}\{x_i/y_i\}\ , \quad x_i,y_i > 0\ ,
	\end{equation}
to obtain the statement via the Rayleigh-Ritz variational principle.
\end{proof} 

\begin{figure}
	\centering
	\begin{subfigure}[b]{0.3\textwidth}
		\includegraphics[width=\textwidth]{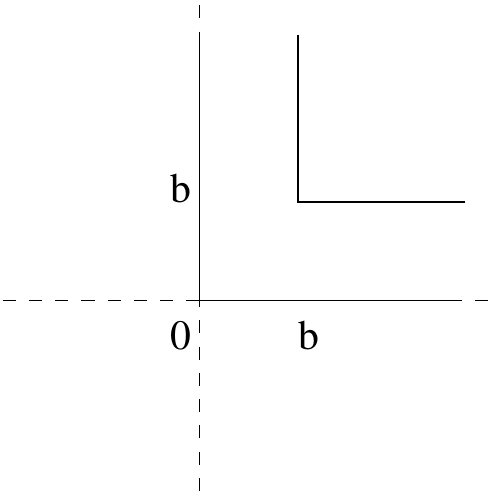}
		\caption{The domain $\Omega_b$}
		\label{figure2a}
	\end{subfigure}
	~ %add desired spacing between images, e. g. ~, \quad, \qquad, \hfill etc.
	%(or a blank line to force the subfigure onto a new line)
	\begin{subfigure}[b]{0.3\textwidth}
		\includegraphics[width=\textwidth]{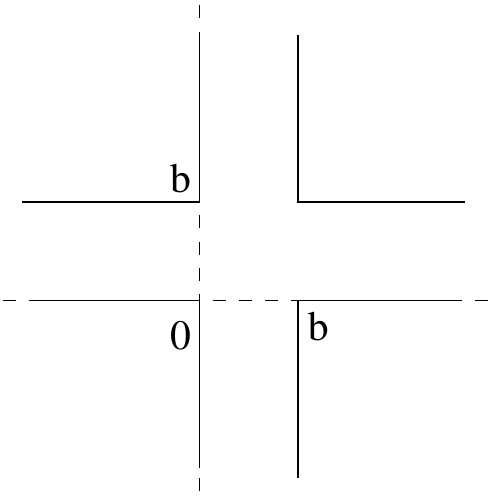}
		\caption{the domain $\Omega_b^\prime$}
		\label{figure2b}
	\end{subfigure}
	~ %add desired spacing between images, e. g. ~, \quad, \qquad, \hfill etc.
	%(or a blank line to force the subfigure onto a new line)
	\begin{subfigure}[b]{0.3\textwidth}
		\includegraphics[width=\textwidth]{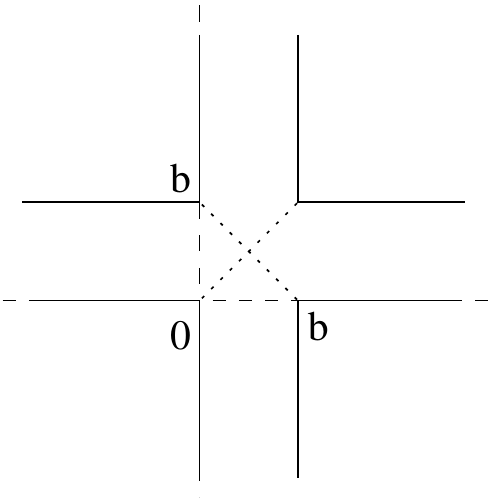}
		\caption{The subdivisions of $\Omega_b^\prime$}
		\label{figure2c}
	\end{subfigure}
	\caption{Illustration of the domains $\Omega_b$, $\Omega_b^\prime$ and the subdivision of $\Omega_b^\prime$ in its 4 parts}\label{figure2}
\end{figure}

\begin{remark} The statement of Theorem~\ref{TheoremBoundState} is, from a physical point of view, truly remarkable. It means that, only due to the geometry of the one-particle configuration space (the half-line), there exists a bound state such that the molecule remains localised around the origin without the presence of attractive external potentials. To compare, if one considers the same problem with the complete real line as one-particle configuration space, no such bound state will exist. 
\end{remark}
We can now investigate the stability of the discrete spectrum in the case where $\sigma \neq 0$, i.e., in the presence of singular two-particle interactions. From a physical point of view, we expect that purely attractive singular interactions do not lead to a destruction of the discrete part of the spectrum. Also, given the repulsive singular interactions are not too strong, the same is expected.
\begin{cor}\label{Corollary} For given $\sigma\in L^{\infty}(0,d)$ and $0  <d < \infty$ the following holds: 
	\begin{enumerate}
		\item[1.] If $\sigma(y) \geq 0$ for almost every $y\in [0,d]$, then $\sigma_d(-\Delta^d_{\sigma}) \neq \emptyset$\ .
		\item[2.] There exists a constant $c > 0$ such that $\sigma_d(-\Delta^d_{\sigma}) \neq \emptyset$ for all $\sigma$ with $\|\sigma\|_{\infty} < c$.
	\end{enumerate}
\end{cor}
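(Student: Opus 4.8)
The plan is to reduce both statements to the $\sigma \equiv 0$ case already settled in Theorem \ref{TheoremBoundState}, using monotonicity and continuity of the quadratic form in $\sigma$. Concretely, let $\varphi_0 \in \mathcal{D}_q$ be the trial function constructed in the proof of Theorem \ref{TheoremBoundState} (a suitable restriction of the extended L-domain ground state), normalised so that $\|\varphi_0\|_{L^2(\Omega)} = 1$, and recall that it satisfies $q_d[\varphi_0] < \pi^2/2d^2 = \inf \sigma_{ess}(-\Delta^d_\sigma)$; set $\eta := \pi^2/2d^2 - q_d[\varphi_0] > 0$. By the min-max principle, to conclude $\sigma_d(-\Delta^d_\sigma) \neq \emptyset$ it suffices to exhibit a single function in $\mathcal{D}_q$ whose Rayleigh quotient lies strictly below $\pi^2/2d^2$; since the essential spectrum is unchanged (Theorem \ref{TheoremEssentialSpectrum} holds for any $\sigma \in L^\infty(0,d)$), such a function forces an eigenvalue beneath it.

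For part 1, observe that the only term in $q_d$ depending on $\sigma$ is $-\int_{\partial\Omega_\sigma}\sigma(y)|\varphi_{bv}|^2\,\mathrm{d}y$, which is nonpositive whenever $\sigma \geq 0$ almost everywhere. Hence for any $\varphi \in \mathcal{D}_q$ one has $q_d[\varphi] \le \int_\Omega |\nabla\varphi|^2\,\mathrm{d}x$, and in particular, evaluating on the fixed trial function from Theorem \ref{TheoremBoundState} and using that its $\sigma$-independent part equals $q_d[\varphi_0]$ when $\sigma \equiv 0$,
\begin{equation*}
q_d[\varphi_0] \;\le\; \int_\Omega |\nabla\varphi_0|^2\,\mathrm{d}x \;=\; q_d^{(\sigma\equiv0)}[\varphi_0] \;=\; \pi^2/2d^2 - \eta \;<\; \pi^2/2d^2.
\end{equation*}
So the same trial function works verbatim, and part 1 follows from min-max.

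For part 2, I would control the boundary term uniformly by a trace inequality: by the trace theorem for $H^1(\Omega)$ on the Lipschitz domain $\Omega$ \cite{Dob05}, there is a constant $C_\Omega > 0$ with $\|\varphi_{bv}\|_{L^2(\partial\Omega_\sigma)}^2 \le C_\Omega \|\varphi\|_{H^1(\Omega)}^2$ for all $\varphi \in \mathcal{D}_q$. Applying this to the fixed $\varphi_0$ gives
\begin{equation*}
\Big| \int_{\partial\Omega_\sigma} \sigma(y)|\varphi_{0,bv}|^2\,\mathrm{d}y \Big| \;\le\; \|\sigma\|_\infty\, C_\Omega\, \|\varphi_0\|_{H^1(\Omega)}^2,
\end{equation*}
so that $q_d[\varphi_0] \le q_d^{(\sigma\equiv0)}[\varphi_0] + \|\sigma\|_\infty C_\Omega \|\varphi_0\|_{H^1(\Omega)}^2 = \pi^2/2d^2 - \eta + \|\sigma\|_\infty C_\Omega \|\varphi_0\|_{H^1(\Omega)}^2$. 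Choosing $c := \eta / (C_\Omega \|\varphi_0\|_{H^1(\Omega)}^2)$, any $\sigma$ with $\|\sigma\|_\infty < c$ yields $q_d[\varphi_0] < \pi^2/2d^2$, and min-max again gives an eigenvalue below the essential spectrum.

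The argument is essentially soft; the only mild subtlety — and the step I would be most careful about — is that the trial function $\varphi_0$ from Theorem \ref{TheoremBoundState} genuinely lies in $\mathcal{D}_q$ (i.e. that it vanishes on $\partial\Omega_D$, which it does since it is obtained from a Dirichlet eigenfunction on the extended domain and the dissection line $y = -x+b$ corresponds precisely to $\partial\Omega_D$) and that its trace on $\partial\Omega_\sigma$ is not identically zero, so that the perturbation is effective but finite; both are immediate from the construction. No compactness or analytic-perturbation machinery is needed, since we never vary the trial function with $\sigma$.
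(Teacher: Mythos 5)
Your proposal is correct and takes essentially the same route as the paper's proof: fix one function whose Rayleigh quotient for $\sigma\equiv 0$ lies below $\pi^2/2d^2$ and note that the boundary term is nonpositive for $\sigma\geq 0$ and uniformly small for $\|\sigma\|_\infty$ small (the paper uses the actual ground state of $-\Delta^d_{\sigma}$ with $\sigma\equiv 0$ provided by Theorem~\ref{TheoremBoundState}, you use the trial function from its proof, and your trace-theorem bound is merely a more explicit form of the paper's estimate $\|\sigma\|_{\infty}\int_{\partial\Omega_{\sigma}}|\varphi_{0,bv}|^2\,\mathrm{d}y$). One harmless slip in your parenthetical aside: the dissection lines $y=x$ and $y=-x+b$ correspond to $\partial\Omega_{\sigma}$, whereas the vanishing of the trial function on $\partial\Omega_{D}$ comes from the Dirichlet condition on the outer boundary of $\Omega_b^{\prime}$; the conclusion you need is nevertheless true.
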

\begin{proof} Let $\varphi_0 \in H^1(\Omega)$ denote the (normalised) ground state eigenfunction to the eigenvalue $E_0 > 0$ of $-\Delta^d_{\sigma}$ with $\sigma\equiv 0$.
	
	If $\sigma(y) \geq 0$ for almost every $y\in [0,d]$, we directly obtain 
	\begin{equation}
	q_d[\varphi_0] \leq E_0\ ,
	\end{equation}
	from \eqref{QuadraticForm}. Hence the statement follows from the Rayleigh-Ritz variational principle. 
	
	On the other hand, we have the estimate
	\begin{equation}\begin{split}
	\Bigl|\int_{\partial \Omega_{\sigma}} \sigma(y)|\varphi_{0,bv}|^2 \ \mathrm{d}y\Bigr| &\leq \|\sigma\|_{\infty} \cdot \int_{\partial \Omega_{\sigma}}  |\varphi_{0,bv}|^2 \ \mathrm{d}y\\
	&< \frac{\pi^2}{2d^2}-E_0\ 
	\end{split}
	\end{equation}
	given that $\|\sigma\|_{\infty}$ is small enough. Consequently, one concludes that $q_d[\varphi_0] < \pi^2/2d^2$ for such $\sigma$ and hence the second statement follows. 
\end{proof}
Corollary~\ref{Corollary} shows that the discrete spectrum is stable with respect to small singular two-particle interactions. However, due to the singular nature of the two-particle interactions one might wonder whether $\sigma_d(-\Delta^d_{\sigma}) \neq \emptyset$ holds for an arbitrary boundary potential $\sigma \in L^{\infty}(0,d)$. The following statement shows that this is not the case.
\begin{theorem}\label{TheoremDiscreteSpectrumRepulsive} Let $0 <d < \infty $ be given. Then there exists a constant $\gamma < 0$ such that $\sigma_d(-\Delta^d_{\sigma})=\emptyset$ for all $\sigma \in L^{\infty}(0,d)$ for which $\sigma(y) \leq \gamma$ for almost every $y \in [0,d]$. 
\end{theorem}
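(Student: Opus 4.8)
The plan is to reduce the statement, using monotonicity of the quadratic form $q_d$ in the parameter $\sigma$, to a single spectral lower bound for a constant coefficient, and then to prove that bound by a Neumann bracketing argument together with an explicit computation on the ``corner'' of $\Omega$. Since $q_d^{\sigma}[\varphi]=\int_\Omega|\nabla\varphi|^2-\int_{\partial\Omega_\sigma}\sigma\,|\varphi_{bv}|^2$ is monotone nonincreasing in $\sigma$ (pointwise a.e.), it suffices to exhibit one $\beta_0>0$ for which the operator $-\Delta^d_{-\beta_0}$ associated with the constant coefficient $\sigma\equiv-\beta_0$ satisfies $\inf\spec(-\Delta^d_{-\beta_0})\ge\pi^2/2d^2$. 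Indeed, for any $\sigma\in L^\infty(0,d)$ with $\sigma(y)\le\gamma:=-\beta_0$ a.e.\ one then has $q_d^{\sigma}[\varphi]\ge q_d^{-\beta_0}[\varphi]\ge\frac{\pi^2}{2d^2}\|\varphi\|^2$ for all $\varphi\in\mathcal{D}_q$, so by Theorem~\ref{TheoremEssentialSpectrum} the whole spectrum lies in $[\pi^2/2d^2,\infty)=\sigma_{ess}(-\Delta^d_\sigma)$, i.e.\ $\sigma_d(-\Delta^d_\sigma)=\emptyset$.

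To establish $\inf\spec(-\Delta^d_{-\beta_0})\ge\pi^2/2d^2$ I would split $\Omega$ along the segment $\{x+y=d\}\cap\Omega$ into the right isosceles triangle $T:=\{(x,y)\in\Omega:\ x+y\le d\}$, whose two legs lie on $\partial\Omega_\sigma$ and whose hypotenuse is the separating segment, and the half-strip $\Omega':=\{(x,y)\in\Omega:\ x+y\ge d\}$. Imposing additional Neumann conditions on the separating segment and using form monotonicity (Neumann bracketing, cf.~\cite{BEH08}) gives $\inf\spec(-\Delta^d_{-\beta_0})\ge\min\{\mu(\beta_0),\nu\}$, where $\mu(\beta_0)$ is the bottom eigenvalue of $-\Delta$ on $T$ with Neumann data on the hypotenuse and form boundary term $+\beta_0\int_{\mathrm{legs}}|\varphi|^2$, while $\nu$ is the bottom of the spectrum of $-\Delta$ on the half-strip $\Omega'$ with Dirichlet data on $\partial\Omega_D\cap\overline{\Omega'}$ and Neumann data on the separating segment. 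A transverse-mode (separation of variables) argument exactly as in the proof of Theorem~\ref{TheoremEssentialSpectrum} yields $\nu=\pi^2/2d^2$, so it only remains to find $\beta_0$ with $\mu(\beta_0)\ge\pi^2/2d^2$.

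For this last point, note that as $\beta_0\to\infty$ the forms on $T$ increase monotonically to the form of the mixed Laplacian $-\Delta_T^{DN}$ on $T$ with Dirichlet data on the two legs and Neumann data on the hypotenuse; since $T$ is bounded with compact Sobolev embedding, the ground-state eigenvalues converge, $\mu(\beta_0)\uparrow\Lambda:=\inf\spec(-\Delta_T^{DN})$. I would then compute $\Lambda$ by reflecting $T$ across its hypotenuse: the reflected domain is the square $Q=[0,d]^2$ carrying Dirichlet data on all of $\partial Q$, with the hypotenuse becoming the diagonal $\{x+y=d\}$. The positive Dirichlet ground state $\sin(\pi x/d)\sin(\pi y/d)$ of $Q$ is symmetric under reflection in that diagonal, so its restriction to $T$ lies in the form domain of $-\Delta_T^{DN}$ and is a positive eigenfunction with eigenvalue $2\pi^2/d^2$; being positive, it is the ground state, whence $\Lambda=2\pi^2/d^2$. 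As $2\pi^2/d^2>\pi^2/2d^2$, one may fix a finite $\beta_0$ with $\mu(\beta_0)\ge\pi^2/2d^2$, and the previous two paragraphs finish the proof; note that $\gamma=-\beta_0<0$ as required.

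The crux of the argument — and the only genuinely nontrivial step — is the strict inequality $\Lambda>\pi^2/2d^2$. A crude comparison of $\Omega$ (or of $T$) with the infinite strip of width $\sqrt2\,d$ only yields $\Lambda\ge\pi^2/2d^2$, which would leave open the possibility that $\mu(\beta_0)$ approaches $\pi^2/2d^2$ without ever attaining it, in which case no finite $\gamma$ would suffice; it is precisely the reflection trick that pins down the exact value $2\pi^2/d^2$ and so provides the necessary margin. All remaining ingredients — the monotonicity of $q_d$ in $\sigma$, Neumann bracketing, the transverse-mode evaluation of $\nu$, and the convergence $\mu(\beta_0)\uparrow\Lambda$ on the bounded domain $T$ — are standard.
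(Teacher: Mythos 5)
Your argument is correct and follows essentially the same route as the paper's proof: reduce to the constant coefficient $\sigma\equiv\gamma$ by monotonicity of the form, split $\Omega$ along $x+y=d$ into the corner triangle and the half-strip (whose spectral bottom is $\pi^2/2d^2$), and use the reflection of the triangle across its hypotenuse onto the square $[0,d]^2$ together with the strong-coupling limit of the repulsive Robin problem to push the corner contribution above $\pi^2/2d^2$. The only differences are organisational: you run a direct Neumann-bracketing lower bound instead of the paper's contradiction argument, which applies \eqref{EquationSumme} to an assumed eigenfunction (and thereby you avoid the paper's discussion of why that restricted eigenfunction cannot vanish), and you obtain the limiting value $2\pi^2/d^2$ from the explicit Dirichlet ground state of the square via monotone form convergence on the triangle, whereas the paper reflects first to a Robin square and then quotes the one-dimensional asymptotics $\lambda_0(\gamma)\to\pi^2/d^2$ from \cite{BolEnd09} after separating variables.
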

\begin{proof} We first note that, due to the assumptions, $q_{\gamma}[\cdot] \leq q_{\sigma}[\cdot]$ where $q_{\gamma}[\cdot]$ denotes the corresponding quadratic form for which $\sigma(y)=\gamma$, $y \in [0,d]$.
	
	 Assume there exists a (normalised) eigenfunction $\varphi \in H^1(\Omega)$ of $-\Delta^d_{\sigma}$ to an eigenvalue $E_0 < \pi^2/2d^2$. Then $-\Delta^d_{\gamma}$, being the self-adjoint operator associated with $q_{\gamma}[\cdot]$, also has a lowest eigenvalue $\tilde{E}_0$ smaller than $\pi^2/2d^2$. Let $\tilde{\varphi} \in H^1(\Omega)$ denote the corresponding normalised eigenfunction. We can then restrict it to the triangle spanned by the points $(0,0),(0,d),(d,0) \in \mathbb{R}^2_+$ and reflect across the line $y=-x+d$ to obtain a trial function for the two-dimensional Laplacian on a square of side length $d$ subjected to (repulsive) Robin boundary conditions (see Fig.~(3) and Remark~\ref{RemarkBoundaryConditions} and note that repulsive refers to $\sigma$ being negative). Note that the restriction of $\tilde{\varphi}$ onto the triangle, as well as the restriction of its gradient, cannot vanish: If the  gradient vanished then $\tilde{\varphi}$ would be constant on the triangle and, due to the Robin boundary conditions along $\partial \Omega_{\sigma}$ (see Remark~\ref{RemarkBoundaryConditions}), this would imply that $\tilde{\varphi}=0$ on the triangle. On the other hand, if $\tilde{\varphi}=0$ on the triangle, then $\tilde{\varphi} \in H^1(\Omega)$ would be a trial function for the two-dimensional Laplacian on $D_{d,\infty}$ (see proof of Theorem~\ref{TheoremEssentialSpectrum}) with Dirichlet boundary conditions along $\partial \Omega_{D}$ and Neumann boundary conditions elsewhere. However, the spectrum of this Laplacian starts at $\pi^2/2d^2$ as can be seen by a separation of variables and we therefore obtain a contradiction. 
	 
	 As described above, using $\tilde{\varphi}\in H^1(\Omega)$ we constructed a trial function on the square of side length $d$ for the Laplacian subjected to (repulsive) Robin boundary conditions. Let   $E^{S}_0 > 0$ denote the lowest eigenvalue of this Laplacian. Then, using \eqref{EquationSumme}, we conclude that 
	 \begin{equation}\label{EqProofXXX}
	 0 < E^{S}_0 < E_0 < \pi^2/2d^2\ .
	 \end{equation}
	 
\begin{figure}
	\centering
	\includegraphics[width=0.3\textwidth]{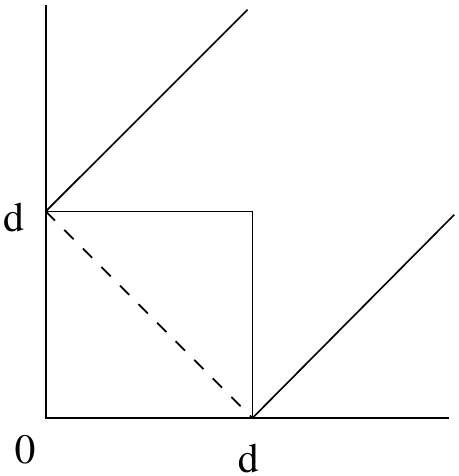}
	
	\caption{The rectangle domain used in the proof of Theorem~\ref{TheoremDiscreteSpectrumRepulsive}}\label{figure3}
\end{figure}
	On the other hand, employing a separation of variables we know that $E^{S}_0=2\lambda_0(\gamma)$ where $\lambda_0(\gamma) > 0$ is the lowest eigenvalue of the one-dimensional Laplacian on an interval of length $d$, subjected to (repulsive) Robin boundary conditions with constant $|\gamma|$. In this case, however, one can show (see Eq.~(4.4) of \cite{BolEnd09}) that $\lambda_0(\gamma) \rightarrow \pi^2/d^2$ as $\gamma \rightarrow -\infty$. Hence, for $|\gamma| > 0$ large enough, we are in contradiction with inequality \eqref{EqProofXXX} and the statement follows.
\end{proof}
\begin{remark} Theorem~\ref{TheoremDiscreteSpectrumRepulsive} is interesting from a physical point of view. It shows that the additional two-particle interactions, albeit their singular nature, destabilise the system by leading to the destruction of the discrete part of the spectrum.
\end{remark}
%

%\section{Conclusion}\label{Conclusion}
%In this paper we have investigated a system of two interacting particles on the half-line $\mathbb{R}_+$, %providing an extension of the model considered in \cite{KM16} by taking into account an additional binding %potential leading to a molecular-like state. We proved the existence of an eigenstate below the essential %spectrum in the case of vanishing singular two-particle interactions and which is due to the geometry of %the one-particle configuration space. We subsequently investigated the effect of the singular two-particle %interactions on the spectrum and proved, as a main result, that the discrete part of the spectrum becomes %trivial given the singular interactions are repulsive and strong enough.

\section*{Acknowledgements}
JK would like to thank S. Egger for helpful discussions. We would also like to thank M.~L.~Glasser for his interest in our model and for providing stimulating references.

{\small
	\bibliographystyle{amsalpha}
	\bibliography{Literature}}

\end{document}